\newtheorem{theorem}{Theorem}
\newtheorem{lemma}[theorem]{Lemma}
\newtheorem{definition}[theorem]{Definition}
\newcommand{\myleft}{\mathopen{}\mathclose\bgroup\left}
\newcommand{\myright}{\aftergroup\egroup\right}
\DeclareMathOperator{\Tr}{Tr}
\newcommand{\fermham}{\mathcal{H}_{\mathscr{S}_n}^{(f)}}
\newcommand{\orthox}{\Pi_{\bar{X}}^{(f)}}
\newcommand{\orthoz}{\Pi_{\bar{Z}}^{(f)}}
\newcommand{\rotorthox}{\tilde{\Pi}_{\bar{X}}^{(f)}}
\newcommand{\rotorthoz}{\tilde{\Pi}_{\bar{Z}}^{(f)}}
\newcommand{\rotop}{D^{(f)}}
\newcommand{\conjrotop}{{D^{(f)}}^{\dagger}}
\newcommand{\identity}{\mathbb{I}}
\newcommand{\rotfermham}{\mathcal{\tilde{H}}_{\mathscr{S}_n}^{(f)}}
\newcommand{\sumk}{\sum\limits_{j=1}^{k}}
\newcounter{example}[section]
\begin{document}
\title{Constructing Fermionic Hamiltonians with Non-Gaussianic low-energy states}

\author{Kartik Anand\thanks{Email: kartik.anand.19031@iitgoa.ac.in}}

\affil{Indian Institute of Technology, Goa}
\maketitle
\begin{abstract}
Quantum PCP conjecture is one of the most influential open problems in quantum complexity theory, which states that approximating the ground state energy for a sparse local Hamiltonian upto a constant is QMA-complete. However, even though the problem remains unsolved, weaker versions of it—such as the NLTS \cite{freedman2013quantum,anshu2022nlts} and NLSS \cite{GL22} conjectures—have surfaced in the hope of providing evidence for QPCP. While the NLTS hamiltonians were first constructed in\cite{anshu2022nlts}, NLSS conjecture still remains unsolved. 
Weaker versions of the NLSS conjecture were addressed in \cite{coble2023local,coble2024}, demonstrating that Clifford and almost-Clifford states—a subclass of sampleable states—have a lower energy bound on Hamiltonians prepared by conjugating the NLTS Hamiltonians from \cite{anshu2022nlts}. In similar spirit, we construct a class of fermionic Hamilltonians for which energy of Gaussian states, a subclass of sampleable fermionic states, is bounded below by a constant. We adapt the technique used in \cite{coble2023local} to our context.
\end{abstract}

\section{Introduction}
The study of low-energy states of sparse Hamiltonians that describe most of the physical models is a central one in the literature of Quantum many-body physics. The low-energy sector largely determines the macroscopic behavior of a material such as its phase, order, and responsiveness to external perturbations. In quantum complexity literature, the complexity of the problem of approximating the ground-state energy of local Hamiltonians is provably QMA-complete for the promise gap of the order of inverse polynomial in the number of qubits\cite{KSV}. It's classical analog, Constraint Satisfaction Problem (CSP) can also be straightforwardly shown to be NP-complete for inverse polynomial promise gap. However, in the proof of the celebrated PCP theorem, CSP was shown to be NP-complete for a constant promise gap \cite{pcpproof}. Similarly, a quantum version of the PCP theorem was conjectured in \cite{qpcpconjecture} and has remained one of the most important problems of quantum complexity theory.\\
In hopes of progressing toward solving the qPCP conjecture, Hastings \cite{freedman2013quantum} proposed that if the qPCP conjecture were true, then there must exist a family of Hamiltonians such that all 'trivial' (constant depth) states are lower bounded by a constant. Along with that, this would prove that the current ansatz methods such as Variational Quantum Eigensolvers or Quantum Adiabatic Optimization Algorithm, that essentially generate trivial low-depth circuits for approximating ground state energies in quantum chemistry and quantum many-body physics—are not good enough in approximating ground-state energies of all possible local Hamiltonians.\\
\textbf{The NLTS conjecture} was shown to be true for Hamiltonians constructed from good Low-Density Parity Check (LDPC) quantum codes \cite{anshu2022nlts}. However, these Hamiltonians did not directly correspond to the Fermionic Hamiltonians, which is what nature uses to 'simulate' physical matter in its own beautiful way—where 'matter particles' that is fermions, follow anti-commutation rules in their creation-annhilation operators. Motivated by this, \cite{anshufermnlts} constructed the NLTS-fermionic Hamiltonians by using a direct n-qubits to 3n-Majorana qubit assimilation mapping which appeared in \cite{BGKT:manybody}. The mapping used was a locality preserving one unlike the standard Jordan-Wigner transformation\\
\textbf{No Low-energy Sampleable States (NLSS) conjecture}, first posited by \cite{gharibiannlss22}, suggests the nonexistence of sampleable states for a Hamiltonian class below an $\Omega(1)$ energy value, unless MA$\neq$QMA. The weaker versions of NLSS: NLCS\footnote{C stands for Clifford} \& NLACS\footnote{AC stands for Almost Clifford} were proved in \cite{coble2023local,coble2024} respectively. However, similar to \cite{anshu2022nlts}, these works focus on qubit Hamiltonians, which don't govern physical-matter quantum systems, such as various models involving electrons and chemical compounds.\\
In this work, we further extend the construction of Fermionic Hamiltonians in \cite{anshufermnlts} and provide a family of Fermionic Hamiltonians whose gaussian states' energy is lower bounded by a constant. We note that the constructed Fermionic Hamiltonians in \cite{anshufermnlts} are purely gaussanic in nature and hence the ground states of such hamiltonians are gaussian as well (particularly, with zero ground state-energy). It is widely known in the Many-body literature that gaussian states and certain non-gaussian states with \textit{'structure'} are efficiently simulable and hence we prove a weaker version of "Fermionic" NLSS conjecture which we call here, \textit{\textbf{F}}-\textbf{NLGS} (\textit{(\textbf{F}ermionic)} \textbf{N}o \textbf{L}ow-energy \textbf{G}aussian \textbf{S}tates)\footnote{Or it could simply be called NLGS, as fermionic systems and qubit matchgate circuits are equivalent. But to avoid any confusion of fermionic systems with qubit systems we prefer adding that 'F' at the front.}.\\
This paper is organized as follows. In Section \ref{sec:background} we provide a brief overview of the research around Quantum PCP conjecture while defining qPCP \& other NLTS variants. In Section \ref{sec:preliminaries} we setup notation that will be used throughout and introduce important concepts directly used in our work. Section \ref{sec:mainproof} contains the proof for our stated F-NLGS conjecture. Finally, Section \ref{sec:discuss} concludes the paper with some other discussions \& future work.

\section{Background on QPCP conjecture}
\label{sec:background}
We review Constraint Satisfaction Problem (CSP), Local Hamiltonian Problem (LHP), Quantum-PCP conjecture, NLTS and its variants and all of their relationships.\\
\textit{\textbf{Constraint Satisfaction Problem:} A constraint satisfaction problem is defined by triple $\langle V,\{D_i\}_{i=1}^{n}, C\rangle$ where $V = \{x_1,...,x_n\}$ is a set of variables with each $x_{i}$ taking values in a finite domain $D_i$ and $C$ is collection of constraints-each a relation $R \in \prod_{x \in S}D_x$ on some subset $S \subseteq V$. A solution is an assignment $f: V \rightarrow \bigcup_i D_i$ such that for every constraint $(S,R)\in C$, the tuple $(f(x)_{x \in S})$ lies in $R$}.\\
Simply put, CSP asks for a assignment such that every clause (constraint) in the relation is satisfied and thus the whole expression.\\
\textit{\textbf{Local Hamiltonian Problem (LHP-$\delta(n)$):} The k-local Hamiltonian of an n-qubit system, $\mathcal{H} = \frac{1}{m}\sum_{i=1}^{m}\mathcal{H}_i$ where $m = poly(n)$ and each $\mathcal{H}_i$ acts non-trivially on at most $k=\mathcal{O}(1)$ qubits and without loss of generality has $||\mathcal{H}_i||\leq1$. Decide whether the ground state energy of $\mathcal{H}$ is $\geq b$ or $\leq a$ (promised one of them is true), for promise gap $b-a\geq\delta(n)$}.\\
It is trivial to see that LHP-$\delta(n)$ is a quantum analog of CSP, as the solution to the former implies the corresponding ground state optimally satisfies all of the local Hamiltonian terms (constraints). CSP has been proved to be NP-complete for a promise gap $\delta(n) = \frac{1}{poly(n)}$, which is straightforward to show and also $\delta(n) = \Omega(1)$\cite{pcpproof}. Hence, it is natural to ask for its quantum analog, LHP-$\delta(n)$ is QMA-complete for $\delta(n) = \Omega(1)$, conjectured as \textit{QPCP conjecture} which has proven difficult to show.\\
\textit{\textbf{Quantum PCP Conjecture:} LHP-$\delta(n)$ is QMA-complete for $\delta = \Omega(1)$}.\\
As noticed by \cite{umeshpcp}, there is an alternative statement for QPCP conjecture, which is generally referred to as proof-checking formulation of the QPCP conjecture. It states that, one can solve any promise problem is QMA, using a QMA verifier, which only accesses a constant number of qubits from a quantum proof. However, this formulation has received considerably less attention and we point the reader to \cite{qpcpadapt} for some interesting results in that direction.\\
In hope to progress towards proving/disproving the \textit{'hardness of approximation'} version of quantum PCP, many results have surfaced providing evidence both for \cite{umeshpcp, freedman2013quantum, eh17, nvy18, anshu2022nlts, anschuetz2023combinatorial, anshufermnlts} and against \cite{bravyilocal,BH:dense-approx,aharanov} the conjecture.\\
NLTS conjecture was formulated in \cite{freedman2013quantum}, as one of the properties which a Hamiltonian must follow to be QMA-hard with a constant promise gap (under the widely believed assumption NP$\neq$QMA).\\
\begin{figure}[htbp]
  \centering
  \includegraphics[width=0.8\textwidth]{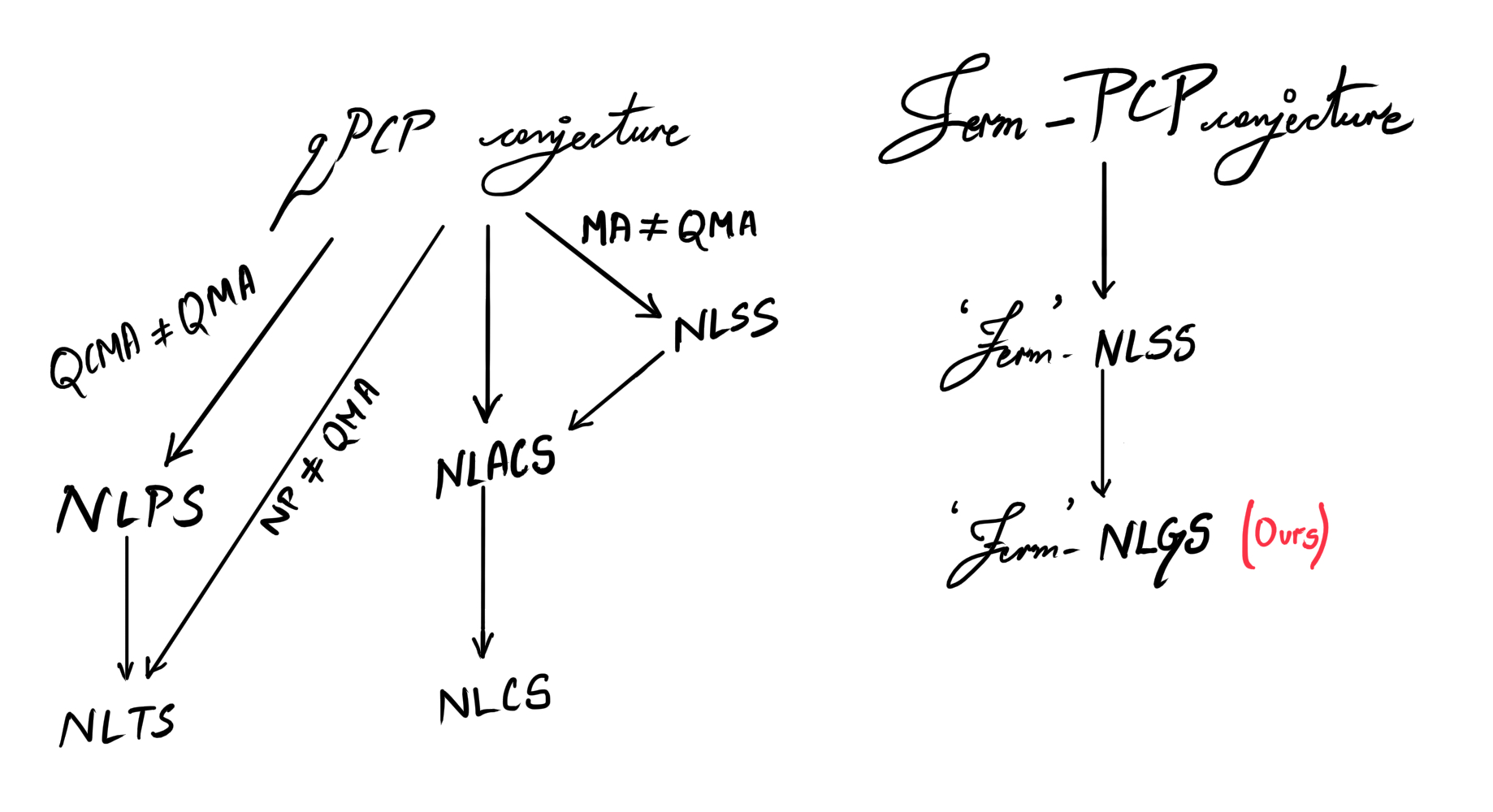}
  \caption{\small
  Relationships between qPCP \& Ferm-PCP conjectures with their respective weaker versions. 'P' in NLPS stands for 'Polynomial', 'C'/'AC' in NLCS/NLACS stands for 'Clifford'/'Almost-Clifford', 'S' in NLSS stands for 'Sampleable' and 'G' in NLGS stands for 'Gaussian'.}
  \label{fig:NLXS}
\end{figure}
\textbf{No Low-energy Trivial States (NLTS) conjecture:} A local Hamiltonian $\mathcal{H}\geq0$ has the NLTS-$\epsilon$ property, if any family of n-qubit states $\psi_n$ with energy Tr$(\psi_n\mathcal{H})\leq\epsilon n$ requires a quantum circuit, which uses an arbitrary number of ancilla qubits, of depth at least $T=\Omega(\log n)$.\\
NLTS conjecture was shown to hold true in the case of good quantum LDPC codes \cite{anshu2022nlts}. However, while the low (enough) energy states of such Hamiltonians are not constant depth, their ground state can be efficiently prepared by a Clifford circuit, thus opposing the idea of being 'complex', as Clifford circuits can be efficiently simulated with great precision. Thus, prompting \cite{GL22} to posit NLSS conjecture as the 'next goal' to prove for the quantum complexity community, under the assumption MA$\neq$QMA.\\
\textit{\textbf{No Low-energy Sampleable States (NLSS) conjecture:} There exist a family of $O(1)$-local Hamiltonians $\{\mathcal{H}_n\}_{n \in \mathbb{N}}$, where each $\mathcal{H}_n$ acts on $n$ qubits, and a constant $\epsilon > 0$ such that for any family of quantum states $\{\psi_n\}_{n \in \mathbb{N}}$ that has a succinct representation allowing sampling-access, where each $\psi_n$ is an $n$-qubit state, we have for any sufficiently large $n$:} 
$$
\Tr\left[\psi_n\mathcal{H}_n\right]> \lambda_{\mathcal{H}_n} + \epsilon.
$$
A significant progress was made towards this conjecture in \cite{coble2023local,coble2024} where the authors provably showed that by inducing 'magic' to the NLTS-Hamiltonians, it was possible to construct Hamiltonians whose low (enough) energy states are not prepared from Clifford circuits (NLCS) or Almost-Clifford circuits (NLACS), under the assumption that NP$\neq$QMA. Both these statements are pre-requisites for NLSS conjecture thus situation looks promising for the NLSS conjecture at the moment. However, these qubit Hamiltonians don't describe the Fermionic systems, which are found in many physical systems made of matter including the newest chip Majorana 1 announced by Mircrosoft, hence, we adapt this construction to this regime for constructing a family Fermionic Hamiltonians for which all gaussian states' energy is lower-bounded by a constant term. Thus, proving a 'weaker' version of NLSS conjecture in Fermionic systems as NLSS implies the existence of F-NLGS, since Gaussian states are a subset of sampleable states. We formally define the F-NLGS conjecture below.\\
\textit{\textbf{(Fermionic) No Low-energy Gaussian States (F-NLGS) Conjecture:}  
\label{def:F-NLGS}
There exists a family of Fermionic $O(1)$-local Hamiltonians $\{\mathcal{H}_n\}_{n \in \mathbb{N}}$, where each $\mathcal{H}_n$ acts on $n$ Fermionic modes, and a constant $\epsilon \geq 0$ such that for any family of quantum states $\{\psi_n\}_{n \in \mathbb{N}}$ that are mixed Gaussian states, we have for any sufficiently large $n$:  }
\[
\operatorname{Tr}(\psi_n \mathcal{H}_n) > \lambda_{\mathcal{H}_n} + \epsilon.
\]
Here, $\lambda_{\mathcal{H}_n}$ denotes the ground state energy of $\mathcal{H}_n$.

\section{Preliminaries \& Important Concepts}
\label{sec:preliminaries}
\subsection{General Definitions \& Concepts}
We assume basic familiarity with quantum computation, complexity theory and a rough idea around Fermionic systems from Quantum many-body physics literature. We denote a set of natural numbers $\{1,2,...,n\}$ with $[n]$. For subset $R\subseteq[n]$, the partial trace over qubits/modes in $R$ is denoted by $\Tr_R$, and complement of $R$ by, $-R\equiv[n] \backslash R$. Reduced state on qubits/modes on $R$ is denoted by trace over the complement of $R$ set on the full density matrix, $\Tr_{-R}[\ket{\psi}\bra{\psi}]$.\\
\begin{definition}
    \textbf{Depth-d(n) quantum circuits family:} Given a family of quantum circuits $\mathbf{C}=\{C_n\}$ such that each circuit consists of one-qubit(mode) or two-qubit(mode) gates. The number of such gates is defined as depth, $d(n)$ of the circuit. If every circuit of the family $\mathbf{C}$ has depth upper bounded by $d(n)$, $\mathbf{C}$ is called a \textit{depth-d(n) quantum circuits family}.
\end{definition}
Further, if $d(n)=O(1)$ then we say $\mathbf{C}$ is a \textit{constant-depth family of quantum circuits}. Similarly, if $d(n)=poly(n)$ then we say $\mathbf{C}$ is a \textit{polynomial-sized family of quantum circuits}.\\
\begin{definition}
    \textbf{n-qubit Pauli group}: A single qubit Pauli group is defined set $\mathcal{P}_1\equiv\{i^lP|P \in \{\identity,X,Y,Z\},l\in\{0,1,2,3\}\}$. The n-qubit Pauli group, denoted as $\mathcal{P}_n$, is the group generated by n-fold tensor product of $\mathcal{P}_1$, $\mathcal{P}_n=\bigotimes\limits_{j\in[n]}\mathcal{P}_1$.
\end{definition}
For an element $S = P_1\otimes...\otimes P_n \in \mathcal{P}_n$, the weight of S is defined to be the number of terms in the tensor product which are non-trivial upto a phase, that is, $wt(S) = |\{P_j|P_j\neq i^l\identity, l\in\{0,1,2,3\}\}|$. Such set of qubits (also modes) is denoted by $R(S)\in[n]$. We will generally denote the number of elements of this set by $k$.
\begin{definition}
    \textbf{Stabilizer group:} A list of commuting and mutually independent generators, $\mathscr{S}_n=\{S_1,...,S_k\}$, generates Stabilizer group $\langle\mathscr{S}_n\rangle$.
\end{definition}
Note that $\mathscr{S}_n$ is an abelian subgroup of $\mathcal{P}_n$, excluding $-\identity$.
\begin{definition}
\textbf{CSS code:} Given a stabilizer group $\langle\mathscr{S}_n\rangle$, corresponding Stabilizer code $\mathcal{C}_{\mathscr{S}_n}$ is defined as the common +1 eigenspace of the operators in $\langle\mathscr{S}_n\rangle$. If the generating set $\mathscr{S}_n$ contains tensor products of only Pauli $X$ and $\identity$ or only Pauli $Z$ and $\identity$, then we say $\mathcal{C}_{\mathscr{S}_n}$ is a CSS code.
\end{definition}
A code $\mathcal{C}_{\mathscr{S}_n}$ is $k'$-local if $\forall S\in\langle\mathscr{S}_n\rangle$, $S$ is non-trivial on at most $k'$ qubits and each qubit of the code is non-trivial on at most $k'$ checks $\mathscr{S}_n$. Correspondingly, the associated \textbf{$k'$-local low-density parity check Hamiltonian} is defined as
\begin{equation}
    \mathcal{H}^{(q)} = \frac{1}{|\mathscr{S}_n|}\sum\limits_{S\in\langle\mathscr{S}_n\rangle}\left(\frac{\identity - S}{2}\right)
\end{equation}
Similarly, $\mathcal{H}^{(q)}$ is a CSS (qubit) Hamiltonian if $\mathscr{S}_n$ generates a CSS code.\\
For convenience, we will denote the individual local term of CSS Hamiltonian as $\Pi_S\equiv\frac{\identity - S}{2}$, where $\Pi_S$ denotes the \textit{orthogonal} projector to $+1$ eigenspace of $S$. $\Pi_S$ acts non-trivially on $wt(S)$ qubits, hence we can also write $\Pi_S = \Pi_S|_{R(S)}\otimes\identity_{[n] \backslash R(S)}$, where $R(S)$ is the set of qubits over which $S$ is non-trivial.
We consider k-local low-density parity check (qubit/fermionic) Hamiltonians throughout this work unless stated otherwise. Such a Hamiltonian is written as $\mathcal{H}^{(q)}=\frac{1}{m}\sum\limits_{j=1}^{m}\mathcal{H}_j$ where $m=|\mathscr{S}_n|$.
For the Hamiltonians we consider in our work, we assume each local term is bounded above in spectral norm by 1, that is, $||\mathcal{H}_i||\leq1$ without loss of generality.\\
\begin{definition}
    \textbf{Frustration-free Hamiltonians:} A Hamiltonian $\mathcal{H}$ is said to frustration-free if the ground energy of $\mathcal{H}$ is $E_0\equiv\min_{\rho}\Tr[\rho\mathcal{H}]=0$ where minimization is taken over all $n$-qubit/modes mixed states.
\end{definition}
\begin{definition}
    \textbf{$\epsilon$-low-energy state:} A state $\psi$ is an $\epsilon$-low-energy state of $\mathcal{H}$ if $\Tr[\psi\mathcal{H}]<E_0+\epsilon$. For frustration-free Hamiltonians, $\Tr[\psi\mathcal{H}]<\epsilon$.
\end{definition}

\subsection{Fermionic Systems}
A system of, say, $n$ fermionic modes can be described by annhilation $a^{\dagger}_i$ and creation $a_i$ operators, for $i\in[n]$. The physics of such systems is beautifully captured by the following simple rules,
\begin{align}
    \{a_i,a_j\}=\{a^{\dagger}_i,a^{\dagger}_j\}=0 ; \{a_i,a^{\dagger}_j\}=\delta_{ij}\identity
\end{align}
However, here we concern ourselves with the Majorana description of Fermionic systems. The same system of n fermionic modes can be described by $2n$ Majorana fermion operators \textbf{(Hermitian)} $c_i$ ($i\in[2n]$) which can be obtained from annhilation-creation desciption in the following way,
\begin{align}
    a_j = \frac{1}{2}\left(c_{2j-1}+ic_{2j}\right); a^{\dagger}_j = \frac{1}{2}\left(c_{2j-1}-ic_{2j}\right)
\end{align}
and obey,
\begin{equation}
    \{c_i,c_j\}=2\delta_{ij}\identity \implies c_i^2=\identity
\end{equation}
\begin{definition}
    \textbf{(Fermionic) Gaussian states:} A state that can be realised by a Fermionic Hamiltonian that is quadratic/bilinear in Majorana operators (or annhilation-creation operators), that is, $\mathcal{H}^{(f)}=\frac{i}{2}\sum_{i,j=1}^{2k}A_{ij}c_ic_j$ is called a Gaussian state.
\end{definition}
\begin{definition}
    \textbf{Mathgate circuits:} Matchgate circuits (MGCs) consist of nearest neightbor $2$-qubit gates of the form:
    \begin{equation}
        G(A,B) = \begin{pmatrix}
            p & 0 & 0 & q\\
            0 & w & x & 0\\
            0 & y & z & 0\\
            r & 0 & 0 & s
        \end{pmatrix}; A = \begin{pmatrix}
            p & q\\ r & s
        \end{pmatrix}, B = \begin{pmatrix}
            w & x \\
            y & z \\
        \end{pmatrix}
    \end{equation}
    where $A,B\in U(2)$ and $\det(A)=\det(B)$. $A$ acts on even Hamming weight subspace of the 2-qubit inputs and $B$ on odd Hamming weight subspace.
\end{definition}
Matchgate Circuits preserve the parity of the Hamming weight of input states leading to a partition of the set of outputs of MGCs into even \& odd divisions.\\
It is widely known that outputs of MGCs acting on a fixed-computational basis input state with even \textit{(respectively odd)} Hamming weight as even \textit{(odd)} Gaussian states \cite{Terhal_2002}. These are precisely the states that arise in unassisted fermionic linear optics. Specifically, MGCs correspond to a model of non-interacting fermions in 1D. We refer the reader to \cite{Terhal_2002,bravyi00ferm,bravyi2004lagrangian,hastings21syk,HSHT} for relevant literature on this topic.\\
A surprising result concerning computational power of matchgate circuits in \cite{surprisingmatchgates} shows that the computational power of matchgate circuits is equivalent as that of universal quantum circuits which are logarithmically space bounded. However, no such equivalence is known for Clifford circuits, circuits made of quantum gates from the set $\{H, P, CNOT\}$ (where P is a single-qubit phase gate), thus as of the writing of this paper, it is speculative to believe that our result is in some way "stronger" than NLCS\cite{coble2023local}. However, both matchgate circuits and clifford circuits are efficiently classically simulable, including a few 'more complex' versions of these circuits \cite{mari2012positive, bravyi2016simulation, bravyi2016improved}.
\subsubsection*{Fermionic Mappings}
\begin{definition}
    \textbf{Jordan–Wigner Transformation (Majorana Form):}
    Consider a one-dimensional chain of $n$ qubits, with Pauli operators 
    $X_j$, $Y_j$, and $Z_j$ acting on the $j$th qubit. Define the ladder operators
    \[
    \sigma_j^{\pm} = \frac{1}{2}\Bigl( X_j \pm i\,Y_j \Bigr).
    \]
    The fermionic annihilation and creation operators, which map the $n$-qubit system to $n$ fermionic modes, are defined by
    \[
    a_j = \Bigl( \prod_{k=1}^{j-1} Z_k \Bigr) \sigma_j^{-}, \qquad
    a_j^\dagger = \Bigl( \prod_{k=1}^{j-1} Z_k \Bigr) \sigma_j^{+}.
    \]
    Next, the \textbf{Majorana operators} (here denoted by $c$) are defined as
    \[
    c_{2j-1} = a_j + a_j^\dagger 
    = \Bigl( \prod_{k=1}^{j-1} Z_k \Bigr) X_j, \qquad
    c_{2j} = i\Bigl( a_j^\dagger - a_j \Bigr) 
    = \Bigl( \prod_{k=1}^{j-1} Z_k \Bigr) Y_j.
    \]
    Conversely, the qubit (Pauli) operators can be recovered from the Majorana operators as
    \[
    X_j = \Bigl( \prod_{k=1}^{j-1} Z_k \Bigr) c_{2j-1}, \qquad
    Y_j = \Bigl( \prod_{k=1}^{j-1} Z_k \Bigr) c_{2j}, \qquad
    Z_j = -i\, c_{2j-1}\, c_{2j}.
    \]
    This transformation establishes an isomorphism between the Hilbert space of $n$ qubits and that of $n$ fermionic modes.
\end{definition}
\begin{definition}
\label{eq:qubit_assimilation}
    \textbf{Locality-Preserving Qubit Assimilation ($n$-qubits $\rightarrow$ $3n$ Majoranas)\ref{fig:qubit_assimilation}:} Let \(\mathscr{H}^{(qf)}\) be the Hilbert space of \(n\) qubits, with the standard Pauli operators
    \(X_j\), \(Y_j\), \(Z_j\) for \(j \in [n]\). 
    Then, for even \(n\), there exists a unitary map \(U\) from \(\mathscr{H}^{(qf)}\) to \(\mathscr{H}^{(f)}\) that acts \emph{locally} as follows:
    \begin{align}
        X_j &\mapsto i\, c_{2j-1}\, c_{2n+j}, \label{eq:mapX}\\[1mm]
        Y_j &\mapsto i\, c_{2j}\, c_{2n+j}, \label{eq:mapY}\\[1mm]
        Z_j &\mapsto i\, c_{2j}\, c_{2j-1}\,. \label{eq:mapZ}
    \end{align}
    Each qubit operator is mapped to a product of two Majorana operators in such a way that the mapping preserves locality.
\end{definition}
We refer the reader to \textit{Lemma 17} of \cite{anshufermnlts} for a proof of the existence of such a unitary. Another possible mapping that can be used for our construction is \cite{kitaev06} due to its locality preserving characteristics.\\
\subsubsection*{Useful Properties for Pure Gaussian states}

\begin{figure}[htbp]
  \centering
  \includegraphics[width=1.0\textwidth]{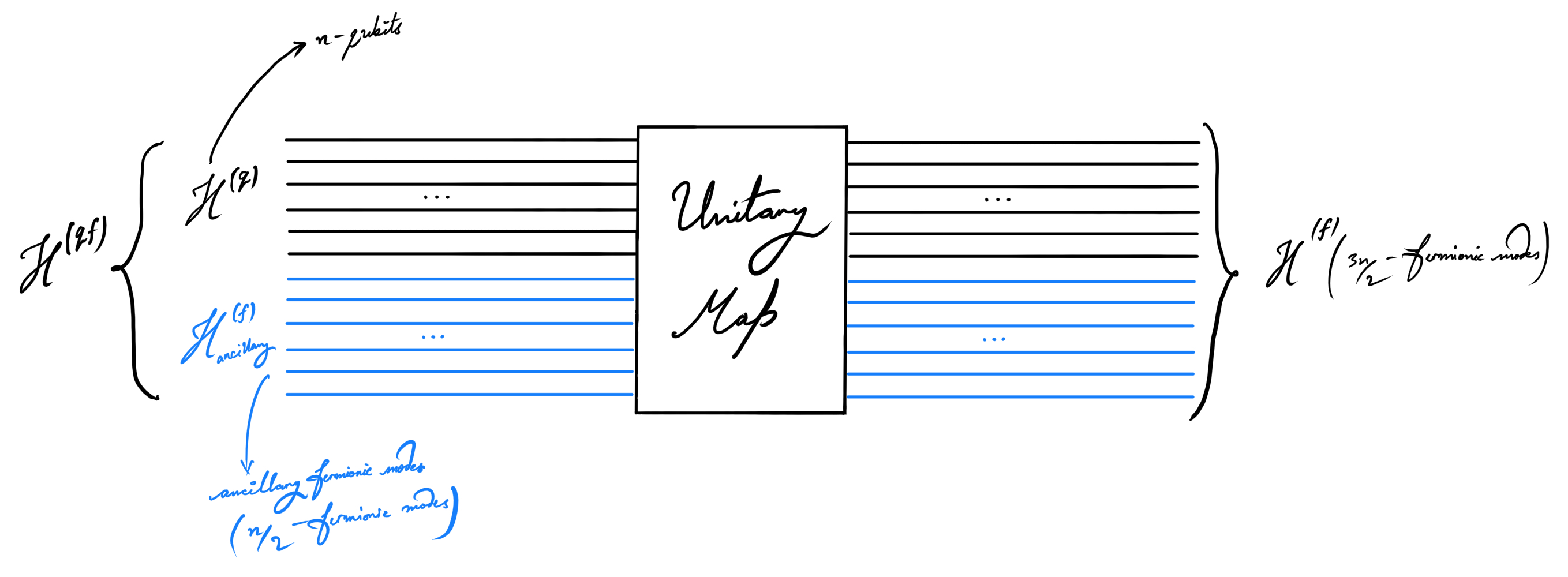}
  \caption{\small
  Illustration of Qubit Assimilation mapping used in this work. Please refer \cite{anshufermnlts} for existence of such a unitary map.}
  \label{fig:qubit_assimilation}
\end{figure}

\begin{definition}
\label{def:covariance_matrix}
    \textbf{Covariance Matrix for Majorana Fermions:} For any state $\psi$ of the system, the \emph{covariance matrix} $\Gamma$ is the $2n\times2n$ real antisymmetric matrix with entries
    $
    \Gamma_{ij} = \frac{i}{2}\, \mathrm{Tr}\Bigl(\psi\,[c_i, c_j]\Bigr) = \frac{i}{2}\, \langle [c_i, c_j] \rangle_\psi\,.
    $
    A Gaussian state is completely characterized by its Covariance matrix.
\end{definition}
The two-point Function $\langle c_ic_j\rangle_{\psi}$ or pure gaussian state $\ket{\psi}$, $\bra{\psi}c_ic_j\ket{\psi}$ is given by,
\begin{equation}
    \langle c_ic_j\rangle_{\psi}=\delta_{ij} - i\Gamma_{ij}
\end{equation}

also, for pure Gaussian states,
\begin{equation}
    \Gamma^2 = -\identity
\end{equation}

\begin{definition}
\label{def:wicks_theorem}
    \textbf{Wick's Theorem:} Let $\psi$ be a fermionic Gaussian state. Then for any four fermionic operators 
    $c_i$, $c_j$, $c_k$, and $c_l$, the expectation value satisfies
    $$
    \langle c_i\, c_j\, c_k\, c_l \rangle_\psi 
    = \langle c_i\, c_j \rangle_\psi \, \langle c_k\, c_l \rangle_\psi 
    - \langle c_i\, c_k \rangle_\psi \, \langle c_j\, c_l \rangle_\psi 
    + \langle c_i\, c_l \rangle_\psi \, \langle c_j\, c_k \rangle_\psi.
    $$
    The quartic term can decomposed into a sum of products of two-point (bilinear) correlation functions for Fermionic Gaussian states.
\end{definition}

\section{(Fermionic) No Low-energy Gaussian States}
We briefly review the construction process of Fermionic NLTS Hamiltonians in \cite{anshufermnlts}.
We start with the n-qubit NLTS Hamiltonian $\mathcal{H}_{NLTS}^{(q)} = \frac{1}{|\mathscr{S}_n|}\sum_{S\in\langle\mathscr{S}_n\rangle}\Pi_S$, where as proved in \cite{anshu2022nlts} $\mathcal{H}_{NLTS}^{(q)}$ is a CSS Hamiltonian. That is, $\Pi_S$ is an orthogonal projection of operator $S$. $S$ is either $X^{\otimes k}$ or $Z^{\otimes k}$ upto a certain permutation. We define $\bar{X}\equiv X^{\otimes k}$, $\bar{Z}\equiv Z^{\otimes k}$.\\
Based on that, we define $\Pi_{\bar{X}}$ and $\Pi_{\bar{Z}}$:
\begin{equation}
\Pi_{\bar{X}} = \frac{I - X^{\otimes k}}{2}, \quad \Pi_{\bar{Z}} = \frac{I - Z^{\otimes k}}{2}
\end{equation}

In order to transform this Hamiltonian into a Fermionic one in such a way that the Hamiltonian respects locality, unlike the widely used Jordan-Wigner transformation, and the NLTS property remains intact. As shown in \cite{anshufermnlts}, using the qubit assimilation technique \ref{eq:qubit_assimilation} transfers the NLTS property to newly constructed Fermionic Hamiltonians.\\
Using the mapping $X \rightarrow ic_{2j-1}c_{2n+j}$, $Y \rightarrow ic_{2j}c_{2n+j}$, $Z \rightarrow ic_{2j}c_{2j-1}$
\begin{equation}
\label{eqn:fermorthoprojecters}
\Pi_{\bar{X}}^{(f)} = \frac{I - \bigotimes_{j=1}^{k}(ic_{2j-1}c_{2n+j})}{2}, \quad \Pi_{\bar{Z}}^{(f)} = \frac{I - \bigotimes_{j=1}^{k}(ic_{2j}c_{2j-1})}{2}
\end{equation}
So, the NLTS Fermionic Hamiltonian looks like this:
\begin{equation}
\label{eqn:fermham}
    \mathcal{H}_{\mathscr{S}_n}^{(f)} = \frac{1}{|\mathscr{S}_n|}\sum\limits_{S\in\langle\mathscr{S}_n\rangle}\Pi_S^{(f)}
\end{equation}
Note that this Hamiltonian is for 3n/2-Fermionic mode system, 3n-Majorana system.\\
\label{sec:mainproof}
\begin{theorem}
\label{thm:main}
Given a family of $O(1)$-local Fermionic Hamiltonians $\{\fermham\}$ which are NLTS, it is possible to prepare an another family of $O(1)$-local Fermionic Hamiltonians $\{\rotfermham\}$ by conjugating the given family with a non-gaussian 'rotation operator', which are simultaneously NLTS \& NLGS.
\end{theorem}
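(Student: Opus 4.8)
The plan is to follow the blueprint of \cite{coble2023local}, replacing the Clifford/stabilizer structure by the Gaussian/matchgate structure and the magic-inducing single-qubit rotation by a \emph{local non-Gaussian} rotation $\rotop$. Concretely, I would take $\rotop$ to be a constant-depth product of local unitaries, each acting on $O(1)$ Majorana modes and each generated by a quartic (hence non-Gaussian) Majorana monomial, e.g.\ $\rotop=\prod_{\alpha}\exp\!\big(i\theta\, c_{i_\alpha}c_{j_\alpha}c_{k_\alpha}c_{l_\alpha}\big)$, applied so that a constant fraction of the modes receives a non-trivial ``kick''. Setting $\rotfermham=\rotop\,\fermham\,\conjrotop$, the first task is bookkeeping: since $\rotop$ is constant-depth and locality-preserving and each $\pihf$ already acts on $O(1)$ modes, each conjugated term $\rotop\,\pihf\,\conjrotop$ still acts on $O(1)$ modes, so $\rotfermham$ is $O(1)$-local and has the same spectrum (in particular the same ground-state energy) as $\fermham$.

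NLTS preservation is the easy direction. If some family of states $\psi_n$ of depth $o(\log n)$ had energy $\le \epsilon n$ under $\rotfermham$, then $\conjrotop\psi_n\rotop$ would have the same energy under $\fermham$; because $\rotop$ is constant-depth, $\conjrotop\psi_n\rotop$ is again a trivial (depth $o(\log n)$) state, contradicting the NLTS property of $\fermham$. Hence low-energy states of $\rotfermham$ remain non-trivial.

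The substance is the NLGS direction, and here I would argue in three steps. First, \emph{soundness}: using that $\fermham$ comes from a good fermionic code, a state with energy $\le \delta$ under $\fermham$ must lie within trace distance $O(\sqrt{\delta})$ of the ground space of $\fermham$; crucially, that ground space consists of (mixtures of) Gaussian states, since $\fermham$ is purely Gaussianic. Second, \emph{pull-back}: given a Gaussian state $\phi$ with $\Tr[\phi\,\rotfermham]\le \delta$, the state $\conjrotop\phi\rotop$ has energy $\le\delta$ under $\fermham$, hence is $O(\sqrt{\delta})$-close to some Gaussian ground state $\psi_{\mathrm{GS}}$; applying $\rotop$, the original $\phi$ is $O(\sqrt{\delta})$-close to the fixed state $\rotop\,\psi_{\mathrm{GS}}\,\conjrotop$, which is \emph{non-Gaussian}. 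Third, the \emph{non-Gaussianity gap}: I would show there is an $n$-independent constant $\mu>0$ such that every Gaussian state has trace distance $\ge \mu$ from $\rotop\,\psi_{\mathrm{GS}}\,\conjrotop$. Combining the second and third steps forces $O(\sqrt{\delta})\ge\mu$, i.e.\ $\delta\ge\LandauOmega(1)$, which is exactly the NLGS bound $\Tr[\phi\,\rotfermham]\ge\lambda_{\rotfermham}+\epsilon$.

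The hard part is the third step: establishing the uniform non-Gaussianity gap, the fermionic analog of the stabilizer-fidelity bound of \cite{coble2023local}. I would attack it via the covariance-matrix and Wick's-theorem characterization of Gaussian states: the rotation $\rotop$ alters the four-point function of $\psi_{\mathrm{GS}}$ so as to violate the Gaussian Wick relation $\langle c_ic_jc_kc_l\rangle=\langle c_ic_j\rangle\langle c_kc_l\rangle-\langle c_ic_k\rangle\langle c_jc_l\rangle+\langle c_ic_l\rangle\langle c_jc_k\rangle$ by an $\LandauOmega(1)$ amount on each rotated block, so no Gaussian state can reproduce the full correlation profile of $\rotop\,\psi_{\mathrm{GS}}\,\conjrotop$. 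The delicate point is making this gap $n$-independent: a single local non-Gaussian kick contributes a non-Gaussianity that could in principle dilute as $n\to\infty$, so I would use the constant-density choice of $\rotop$ to guarantee that a constant fraction of the rotated checks $\rotop\,\pihf\,\conjrotop$ are genuinely non-quadratic and that their Wick defects add rather than cancel when averaged over the stabilizer group. Equivalently, one must rule out that the best Gaussian approximation to $\rotop\,\psi_{\mathrm{GS}}\,\conjrotop$ becomes arbitrarily good in the large-$n$ limit; I expect this to follow from a term-by-term lower bound on $1-\tfrac12\big(1-\langle\rotop\,S^{(f)}\conjrotop\rangle_\phi\big)$ through Wick's theorem, but controlling the cross-terms uniformly in $n$ is the crux of the argument.
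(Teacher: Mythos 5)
Your conjugation setup and your NLTS-preservation argument coincide with the paper's (Sections 4.1--4.2), but your NLGS argument takes a global ``closeness to the rotated ground space'' route that the paper does not take, and it has concrete gaps. First, the soundness step fails as stated: for a normalized commuting-projector Hamiltonian $\fermham=\frac{1}{m}\sum_i\Pi_i^{(f)}$, energy at most $\delta$ does \emph{not} imply trace distance $O(\sqrt{\delta})$ to the ground space --- a state that maximally violates a single check and satisfies all others has energy $1/m=o(1)$ yet is orthogonal to the code space, so $\delta$ can be made arbitrarily small while the distance stays $1$. Second, the premise that the ground space of $\fermham$ consists of Gaussian states is unjustified: each term $\pihf$ is a product of $k$ Majorana bilinears, hence degree $2k$ in the $c_i$ rather than quadratic, and the ground space is the image under the qubit-assimilation unitary of a good qLDPC code space, whose states are highly entangled and generically non-Gaussian (the paper asserts Gaussianity of the eigenstates in its motivation, but its proof never relies on it, whereas yours does). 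Third, even granting the pull-back, $\conjrotop\phi\rotop$ is only close to \emph{some} state in an exponentially large degenerate ground space, so your non-Gaussianity gap must hold uniformly over the entire rotated ground space and all mixtures, not for one fixed $\rotop\,\psi_{\mathrm{GS}}\,\conjrotop$; you correctly identify this uniform gap as the crux, but it is exactly the step you do not supply.

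The paper sidesteps all of this by never invoking the ground space or the code distance: it proves a purely local, per-check bound (its Lemma 3). For an arbitrary pure Gaussian $\ket{\eta}$ it restricts to the $O(1)$ support $R_S$ of each check (the reduction of a pure Gaussian state is Gaussian), writes $\bra{\xi}\rotorthox\ket{\xi}=\tfrac12\bigl[1-\bra{\xi}\bigotimes_{j}h_{\bar{X}}^{j}\ket{\xi}\bigr]$, and upper-bounds the nontrivial term by $k_0(\cos2\theta+3\sin2\theta)$ using Wick's theorem together with $\lvert\Gamma_{ij}\rvert\le 1$, then tunes $\theta$ so that this is strictly below $1$; averaging over the checks yields the constant energy lower bound directly for every mixed Gaussian state. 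To rescue your route you would need both a detectability-lemma-type conversion from energy to ground-space overlap and a uniform Gaussian-fidelity bound for the whole rotated code space; the local Wick-theorem computation is the much shorter path and is what the paper actually does.
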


\begin{proof}
We start with the Fermionic Hamiltonian \ref{eqn:fermham} constructed in \cite{anshufermnlts}, however, note that it's a Gaussian Hamiltonian. That is, all of the local terms are Bilinear in Majorana operators and hence, every eigenstate of this Hamiltonian is a Gaussian state, which can be efficiently simulated on a classical simulator, thus nullifying the idea of really having low energy states that are 'complex' in nature.\\
Motivated by this we adapt the technique used in \cite{coble2023local} to prepare NLCS \& NLACS qubit Hamiltonians, to construct Fermionic Hamiltonians whose energy $\forall$ Gaussian states is lower bounded by a constant.\\
The proof is mainly divided into \textbf{three} parts: \\
In Section \ref{sec:conjugation} we introduce the conjugation operation to $\fermham$ and derive the 'rotated' Hamiltonian $\rotfermham$. In Section \ref{sec:simulateouslynlts}, we present a simple proof that the constructed Hamiltonian $\rotfermham$ is also an NLTS Hamiltonian. Further, in Section \ref{sec:lower_bound} we bound the energy of low-energy Gaussian states for these Hamiltonians. Since qPCP conjecture implies the existence of simultaneous NLTS \& NLSS qubit Hamiltonians, in the same spirit, Ferm-qPCP conjecture as stated in \cite{anshufermnlts} analogously implies the existence of simultaneous NLTS \& NLGS Hamiltonians.

\subsection{Conjugating with Non-Gaussian Unitary operator}
\label{sec:conjugation}
We introduce a carefully crafted non-Gaussian rotation operator to incorporate non-Gaussianity in \ref{eqn:fermham}. For further details on the Majorana properties used in the following calculations, we refer the reader to the Preliminaries section \ref{sec:preliminaries}.\\
We denote a Non-Gaussian 'one-gate rotation operator' as $\rotop$, and the rotated Fermionic Hamiltonian $\rotfermham$ is given as:
\begin{equation*}
     \rotfermham = \rotop \fermham {\rotop}^{\dagger}
\end{equation*}
Now from \ref{eqn:fermham}, we can directly work in terms of rotating individual terms of the local Fermionic Hamiltonian $\fermham$. Thus it suffices to rotate a general $\orthox$ \& $\orthoz$ local term to talk about $\rotfermham$\\
\begin{equation*}
    \rotorthox = \rotop \orthox {\rotop}^{\dagger}; \rotorthoz = \rotop \orthoz {\rotop}^{\dagger}
\end{equation*}
We derive the expression for $\rotorthox$ and state the expression for $\rotorthoz$ as the calculations for both go hand in hand.\\
With trial and error with reckon one possible rotation operator $\rotop = exp(i\theta\sum\limits_{j=1}^{k}c_{2j-1}c_{2j+1}c_{k3}c_{k4})$ where $c_{k3}$ and $c_{k4}$ are such that, without loss of generality $\{c_{k3},c_{k4}\}=\{c_{k_l}, c_{2j}\}=\{c_{k_l}, c_{2j-1}\}=\{c_{k_l}, c_{2j+1}\}=0, \forall l\in\{3,4\}$. That is, we can always pick out indices $k_3, k_4$ such that they don't collide with the majorana operators' indices that are used in the expression of $\fermham$. To get a clear understanding of this, we refer to the diagram in the Preliminaries section \ref{sec:preliminaries}.\\
Applying this on $\orthox$ we get,
\begin{align*}
    \rotorthox = \rotop \orthox {\rotop}^{\dagger},
\end{align*}
Denote $c_{2j-1}c_{2j+1}c_{k3}c_{k4}$ by $\mathscr{C}_{K,j}$\\
We have,
\begin{align*}
    \rotorthox = \left(e^{i\theta\sum\limits_{j=1}^{k}\mathscr{C}_{K,j}}\right)\left(\frac{I - \bigotimes_{j=1}^{k}(ic_{2j-1}c_{2n+j})}{2}\right){\left(e^{i\theta\sum\limits_{j=1}^{k}\mathscr{C}_{K,j}}\right)}^{\dagger},\\
    =\frac{1}{2}\left[\mathbb{I} - \bigotimes\limits_{j=1}^{k} (i)^{k} \left(e^{i\theta\sum\limits_{j=1}^{k}\mathscr{C}_{K,j}}\right) c_{2j-1}c_{2n+j}{\left(e^{i\theta\sum\limits_{j=1}^{k}\mathscr{C}_{K,j}}\right)}^{\dagger}\right]
\end{align*}
Analyzing the Gaussian term here,
\begin{gather*}
    \left(e^{i\theta\sum\limits_{j=1}^{k}\mathscr{C}_{K,j}}\right) c_{2j-1}c_{2n+j}{\left(e^{i\theta\sum\limits_{j=1}^{k}\mathscr{C}_{K,j}}\right)}^{\dagger}=\\
    \left(e^{i\theta\mathscr{C}_{K,1}}e^{i\theta\mathscr{C}_{K,2}}...e^{i\theta\mathscr{C}_{K,j}}...e^{i\theta\mathscr{C}_{K,k}}\right).\left(c_{2j-1}c_{2n+j}\right).\left(e^{i\theta\mathscr{C}_{K,1}}e^{i\theta\mathscr{C}_{K,2}}...e^{i\theta\mathscr{C}_{K,j}}... e^{i\theta\mathscr{C}_{K,k}}\right)^{\dagger},\\
\end{gather*}
It is straightforward to show that $\mathscr{C}_{K,j}$ is a Hermitian operator and $\left[\mathscr{C}_{K,j},\mathscr{C}_{K,j'}\right]$ for $j \neq j'$. Further, we conclude that given these properties of $\mathscr{C}_{K,j}$ operator, the previous expression reduces to:
\begin{gather*}
    \left(e^{i\theta\mathscr{C}_{K,j}}\right) \left(c_{2j-1}c_{2n+j}\right) {\left(e^{i\theta\mathscr{C}_{K,j}}\right)}^{\dagger} = \left(e^{i\theta\mathscr{C}_{K,j}}\right) \left(c_{2j-1}c_{2n+j}\right) \left(e^{-i\theta\mathscr{C}_{K,j}}\right), (\because \mathscr{C}_{K,j}={\mathscr{C}_{K,j}}^{\dagger})
\end{gather*}
Define $\mathbf{c}_j \equiv c_{2j-1}c_{2n+j}$ for convenience. Before moving on, we explore the relation between $\mathbf{c}_j$ \& $\mathscr{C}_{K,j}$. It's easy to see that, $\mathscr{C}_{K,j}.\mathbf{c}_j = (c_{2j-1}c_{2j+1}c_{k_3}c_{k_4}) (c_{2j-1}c_{2n+j}) = - \mathbf{c}_j . \mathscr{C}_{K,j} \implies \left\{ \mathscr{C}_{K,j}, \mathbf{c}_j \right\} = 0 $\\
and, $ \mathscr{C}_{K,j} . \mathscr{C}_{K,j} = \mathbb{I} $.\\
Using the above relations we can finally derive the expression for $\rotorthox$:
\begin{gather}
\label{eq:rotated_orthox_term}
    \left(e^{i\theta\mathscr{C}_{K,j}}\right) \mathbf{c}_j {\left(e^{i\theta\mathscr{C}_{K,j}}\right)}^{\dagger} = \left( \cos\theta\identity + i\sin\theta\mathscr{C}_{K,j} \right) \left( \mathbf{c}_j \right) \left( \cos\theta\identity - i\sin\theta\mathscr{C}_{K,j} \right)\notag\\
    = \cos^2\theta\mathbf{c}_j - i\sin\theta\cos\theta\mathbf{c}_j\mathscr{C}_{K,j} + i\sin\theta\cos\theta\mathscr{C}_{K,j}\mathbf{c}_j + \sin^2\theta\mathscr{C}_{K,j}\mathbf{c}\mathscr{C}_{K,j}\notag\\
    =\left(\cos2\theta\identity + i\sin2\theta\mathscr{C}_{K,j}\right)\mathbf{c}_j = e^{i2\theta\mathscr{C}_{K,j}}c_{2j-1}c_{2n+j}
\end{gather}
Hence, finally we have the expression for $\rotorthox$:
\begin{gather*}
    \rotorthox = \rotop \orthox \conjrotop = \frac{1}{2}\left[ \identity - \bigotimes\limits_{j=1}^{k} e^{i2\theta\mathscr{C}_{K,j}}ic_{2j-1}c_{2n+j} \right]\\
\end{gather*}
Or, opening up the exponential term with Euler's identity will lead us to,
\begin{gather}
    \rotorthox = \rotop \orthox \conjrotop = \frac{1}{2}\left[ \identity - \bigotimes\limits_{j=1}^{k}\left( \left(\cos2\theta\right) i c_{2j-1}c_{2n+1} + \left( -\sin2\theta \right) \left(c_{2j-1}c_{2j+1}c_{k_3}c_{k_4}\right).\left(c_{2j-1}c_{2n+j}\right) \right) \right]\notag\\
    = \frac{1}{2}\left[ \identity - \bigotimes\limits_{j=1}^{k}\left( \left(\cos2\theta\right) i c_{2j-1}c_{2n+1} + \left(\sin2\theta \right) c_{2j+1}c_{k_3}c_{k_4}c_{2n+j} \right) \right] = \frac{1}{2}\left[ \identity - \bigotimes\limits_{j=1}^{k}h_{\bar{X}}^{j} \right]
\end{gather}
where we define $h_{\bar{X}}^{j} \equiv  \left(\cos2\theta\right) i c_{2j-1}c_{2n+1} + \left(\sin2\theta \right) c_{2j+1}c_{k_3}c_{k_4}c_{2n+j}$\\
Going through an almost similar calculation as above, we also obtain the expression for $\rotorthoz$. We compile both of the expressions below, where we define $h_{\bar{Z}}^{j} \equiv  \left(\cos2\theta\right) i c_{2j-1}c_{2n+1} - \left(\sin2\theta \right) c_{2j+1}c_{k_3}c_{k_4}c_{2j}$
\begin{gather}
    \rotorthox = \frac{1}{2}\left[ \identity - \bigotimes\limits_{j=1}^{k}h_{\bar{X}}^{j} \right] ; \rotorthoz = \frac{1}{2}\left[ \identity - \bigotimes\limits_{j=1}^{k}h_{\bar{Z}}^{j} \right]
\end{gather}
The local Hamiltonian terms in $\rotfermham$ are composed of terms that look like above, with each term having their own value for number of Fermionic modes over which it acts non-trivially, $k$.

\subsection{Simultaneous F-NLGS \& NLTS}
\label{sec:simulateouslynlts}
As discussed before, analogously to qubit-qPCP conjecture, Ferm-qPCP conjecture implies the existence of simultaneous NLTS and F-NLGS Fermionic Hamiltonians. Here we provide a simple argument as to why our constructed Hamiltonians when created using conjugation with a Non-Gaussian, constant depth unitary, preserves the NLTS property. Formally, we prove the following lemma which closely follows \textit{Lemma 4.} in \cite{coble2023local}.
\begin{lemma}
\label{lemma:simulatenoslynlts}
    Given $\left\{\fermham\right\}$ a family of $\epsilon$-NLTS local Hamiltonians, where $n$ is the number of Fermionic modes \& $W = \{W_n\}$ is a family of constant-depth Non-Gaussian circuits. Then, the family of conjugated Hamiltonians $\left\{\mathcal{H}_n^{(f)}\right\}$ is also $\epsilon$-NLTS.
\end{lemma}
\begin{proof}
    For the sake of the argument, say $\{\rotfermham\}$ is not NLTS. By definition, for every $\epsilon'>0$ there is an $n$ and a constant depth, Non-Gaussian circuit such that $U_{\epsilon',n}\ket{0}^{\otimes n}$ is an $\epsilon'$-low energy state of $\mathcal{\tilde{H}}_n^{(f)} = W_n^{\dagger}\mathcal{H}_n^{(f)}W_n$, i.e., $ \bra{0}^{\otimes n} U_{\epsilon',n}^{\dagger} W_n^{\dagger}\mathcal{H}_n^{(f)}W_n U_{\epsilon',n} \ket{0}^{\otimes n} < E_0(W_n^{\dagger}\mathcal{H}_n^{(f)}W_n) + \epsilon'$\\
    Since, $W_n^{\dagger}$ is a unitary operator, ground state energy. $E_0$ of $\mathcal{H}_n^{(f)}$ \& $ \mathcal{\tilde{H}}_n^{(f)} $ are equal.\\
    $\therefore$ defining $\ket{\Psi_{\epsilon, n}}\equiv W_n U_{\epsilon,n}\ket{0}^{\otimes n}$, we have,
    \begin{equation*}
        \bra{\Psi_{\epsilon,n}}\mathcal{H}_n^{(f)}\ket{\Psi_{\epsilon,n}} < E_0\left(\mathcal{H}_n^{(f)}\right) + \epsilon
    \end{equation*}
    Thus $\ket{\Psi_{\epsilon,n}}$ is an $\epsilon$-low energy state. Since $W_nU_{\epsilon,n}$ is a constant depth circuit.\\
    Implying, $\mathcal{H}_n^{(f)}$ has an $\epsilon$-low energy trivial state, contradicting assumption of $\epsilon$-NLTS
\end{proof}

\subsection{Energy lower bound on Gaussian states}
\label{sec:lower_bound}
Finally having the expression of the rotated Hamiltonian $\rotfermham$ and proving the 'NLTS preservation' of conjugation operation, we are finally read to prove our main result \textit{Theorem \ref{thm:main}}.\\
The system under consideration has $3n/2$ - fermionic modes with Hamiltonian $\rotfermham$:
\begin{equation}
    \rotfermham = \frac{1}{|\mathscr{S}_n|}\sum\limits_{S\in\langle\mathscr{S}_n\rangle}\Pi_S^{(f)}
\end{equation}
A Gaussian state for this system, $\psi$ on $3n/2$ - Fermionic modes, $\psi = \sum_jp_j\ket{\varphi_j}\bra{\varphi_j}$, is a mixture of pure gaussian states $\ket{\varphi_j}$ on $3n/2$ modes.\\
The energy of this state is given by,
\begin{gather}
\label{eq:energyexp}
    \Tr\left[\psi\fermham\right] = \sum\limits_{j}p_j\bra{\varphi_j}\fermham\ket{\varphi_j} = \frac{1}{|\mathscr{S}_n|}\sum\limits_{S\in\mathscr{S}_n}\sum\limits_{j}p_j\bra{\varphi_j}\Pi_S^{(f)}\ket{\varphi_j}
\end{gather}
With this little argument, it suffices to show a lower bound on 'local energy' of every possible pure gaussian state on $3n/2$ - Fermionic modes, that is, every $\bra{\varphi_j}\Pi_S^{(f)}\ket{\varphi_j}$ term where $\ket{\varphi_j}$ is a pure gaussian state for $3n$-Majorana system.\\
\begin{lemma}
\label{lemma:ortholowerbound}
    There exists a lower bound $L_k$ on $\bra{\eta}\tilde{\Pi}_S^{(f)}\ket{\eta}$ where $\ket{\eta}$ is a pure gaussian state, where $k$ is the number of Fermionic modes over which $\tilde{\Pi}_S^{(f)}$ acts non-trivially.
\end{lemma}
\begin{proof}
    Say the reduced set of Fermionic modes is defined as $R_S \subset \left[n\right]$, such that, $\Pi_S^{(f)}$ only works non trivially on the modes contained in $R_S$ set.\\
    \begin{gather*}
        \Pi_S^{(f)} = \Pi_S^{(f)}|_{R_S} \otimes \identity_{\left[n\right] \backslash R_S},
    \end{gather*}
    So, $\bra{\eta}\tilde{\Pi}_S^{(f)}\ket{\eta} = \Tr\left[\eta_{R_S}\tilde{\Pi}_S^{(f)}|_{R_S} \right]$ where $\eta_{R_S} \equiv \Tr_{-R_S}\left[\ket{\eta}\bra{\eta} \right]$. Since $\ket{\eta}$, a is a 3n/2 - Fermionic pure gaussian state, its reduction on $R_S$ modes is a pure gaussian state too.\\
    From above, we have, 
    \begin{equation}
    \label{eq:kgaussequals3n2gauss}
    \bra{\eta}\tilde{\Pi}_S^{(f)}\ket{\eta} = \bra{\eta_{R_S}}\tilde{\Pi}_{S}^{(f)}\ket{\eta_{R_S}}. 
    \end{equation}
    Thus, for the remaining proof we'll find it suffice to show properties such as energy lower bounds for reduced pure gaussian states on 'non-trivial modes'.\\
    We arrive at the core part of showing our main result. For the rest part of the proof we will denote $\ket{\eta_{R_S}}$ as $\ket{\xi}$. Since $\tilde{\Pi}_S^{(f)}$ is either $\rotorthox$ or $\rotorthoz$ upto a certain permutation, we show an energy lower bound for both of these.\\
    \begin{gather}
    \label{eq:individual_energy}
        \bra{\xi}\rotorthox\ket{\xi} = \frac{1}{2}\left[\identity - \bra{\xi}\bigotimes\limits_{j=1}^{k}h_{\bar{X}}^{j}\ket{\xi} \right]; \bra{\xi}\rotorthoz\ket{\xi} = \frac{1}{2}\left[\identity - \bra{\xi}\bigotimes\limits_{j=1}^{k}h_{\bar{Z}}^{j}\ket{\xi} \right]
    \end{gather}
    To lower bound these quantities, it makes sense to upper bound the non-trivial terms inside these expressions. Let's start with $\bra{\xi}\bigotimes\limits_{j=1}^{k}h_{\bar{X}}^{j}\ket{\xi}$,
    \begin{align*}
        \bra{\xi}\bigotimes\limits_{j=1}^{k}h_{\bar{X}}^{j}\ket{\xi} = \sum\limits_{j=1}^{k}\bra{\xi}h_{\bar{X}}^{j}\ket{\xi} = \sumk \bra{\xi}\left[\left(\cos2\theta\right)ic_{2j-1}c_{2n+j}+\left(sin2\theta\right)c_{2j-1}c_{k_3}c_{k_4}c_{2n+j}\right]\ket{\xi}\\
        =\sumk\left(i \cos2\theta \right) \bra{\xi}c_{2j-1}c_{2n+j}\ket{\xi} + \left(\sin2\theta\right)\bra{\xi}c_{2j+1}c_{k_3}c_{k_4}c_{2n+j}\ket{\xi}
    \end{align*}
    The foremost inequality follows from the fact that energy term for each fermionic mode is additive in nature. The terms $\bra{\xi}c_{2j-1}c_{2n+j}\ket{\xi}$ and $\bra{\xi}c_{2j+1}c_{k_3}c_{k_4}c_{2n+j}\ket{\xi}$ for each $j \in \left[n\right]$ can be shown to be lower bounded by using the unique properties of the \textit{covariance matrix} for gaussian states and \textit{Wick's theorem}. We refer the reader to preliminaries section \ref{sec:preliminaries} for more information.\\
    We have,
    \begingroup
    \footnotesize
    \begin{gather}
        \sumk\bra{\xi}h_{\bar{X}}^j\ket{\xi} = \left(\cos2\theta\right) \sumk \Gamma_{2j-1,2n+j} + \left(\sin2\theta\right)\sumk\left(-\Gamma_{2j+1,k_3}\Gamma_{k_4,2n+j} + \Gamma_{2j+1,k_3}\Gamma_{k_4,2n+j} - \Gamma_{2j+1,2n+j}\Gamma_{k_3,2n+j} \right)\notag
    \end{gather}
    \endgroup
    Using the fact that, for all pure gaussian states absolute values of all the entries of covariance matrix is upper bounded by 1, we have a rough upper bound,
    \begin{equation}
        \label{eq:upperx}
        \bra{\xi}\bigotimes\limits_{j=1}^{k}h_{\bar{X}}^j\ket{\xi} \leq k\left(\cos2\theta + 3\sin2\theta \right) \leq k_0\left(\cos2\theta + 3\sin2\theta \right)
    \end{equation}
    where $k_0$ is the largest "non-triviality value" for a local term that appears in the Hamiltonian expression $\rotfermham$ hence $k \leq k_0$. Using a similar argument we have the same upper bound for
    \begin{equation}
        \label{eq:upperz}
        \bra{\xi}\bigotimes\limits_{j=1}^{k}h_{\bar{Z}}^j\ket{\xi} \leq k_0\left(\cos2\theta + 3\sin2\theta \right)
    \end{equation}
    With \ref{eq:upperx}, \ref{eq:upperz} and \ref{eq:kgaussequals3n2gauss} for \ref{eq:individual_energy} we have,
    \begin{equation}
    \begin{aligned}
        \bra{\eta}\rotorthox\ket{\eta} &\geq \frac{1}{2}\left[1 - k_0\left(\cos2\theta + 3\sin2\theta\right) \right] \\
        \bra{\eta}\rotorthoz\ket{\eta} &\geq \frac{1}{2}\left[1 - k_0\left(\cos2\theta + 3\sin2\theta\right) \right]
    \end{aligned}
    \end{equation}
\end{proof}
Continuing with our main proof, we use the results of \ref{lemma:ortholowerbound} in \ref{eq:energyexp} to obtain our final lower bound on the energy of all (possibly mixed) gaussian state $\psi$:
\begin{gather}
    \Tr\left[\psi\mathcal{H}_{\mathscr{S}_n}^{(f)}\right] = \frac{1}{|\mathscr{S}_n|}\sum\limits_{S\in\langle\mathscr{S}_n\rangle}\sum\limits_{j}p_j\bra{\varphi_j}\Pi_S^{(f)}\ket{\varphi_j}\geq \frac{1}{|\mathscr{S}_n|}\sum\limits_{S\in\langle\mathscr{S}_n\rangle}\sum\limits_jp_j\frac{1}{2}\left[1 - k_0\left(\cos2\theta + 3\sin2\theta\right)\right]\notag\\
    = \frac{1}{|\mathscr{S}_n|}\sum\limits_{S\in\langle\mathscr{S}_n\rangle}\frac{1}{2}\left[1 - k_0\left(\cos2\theta + 3\sin2\theta\right)\right] = \frac{1}{2}\left[1 - k_0\left(\cos2\theta + 3\sin2\theta\right)\right]
\end{gather}
As we are dealing with $O(1)$-local Fermionic Hamiltonians, $k_0$ is a constant and does not scale with number of Fermionic modes. Now, as a 'final touch' we also need to be careful about choosing the parameter $\theta$ so that our energy lower bound is non-negative and $k_0\geq2$,
\begin{gather}
    \frac{1}{2}\left[1 - k_0\left(\cos2\theta + 3\sin2\theta\right)\right] > 0,\\
    \frac{1}{\left(\cos2\theta + 3\sin2\theta\right)} > k_0 \geq 2
\end{gather}
Suggesting,
$$
\sin(2\theta + \phi) \in \left(0, \frac{1}{k_0\sqrt{10}}\right), \quad \text{with } \phi = \arcsin\left(\frac{1}{\sqrt{10}}\right)
$$
Thus, for such a value of $\theta$ we have a family of Fermionic Hamiltonians $\left\{\rotfermham\right\}$ obtained from rotating the family of Fermionic NLTS Hamiltonians $\left\{\fermham\right\}$ as obtained in \cite{anshufermnlts}.

\end{proof}

\section{Acknowledgements}
We thank Anurag Anshu \& Sevag Gharibian for making their lectures on Quantum computing \& Quantum complexity theory online.

\section{Discussion}
\label{sec:discuss}
By carefully adapting the technique used in \cite{coble2023local,coble2024} to the fermionic domain, we provide the construction of fermionic Hamiltonians whose sufficiently low-energy states are non-Gaussian in nature, thus extending the work on simple stabilizer fermionic Hamiltonians presented in \cite{anshufermnlts}. We believe that the $n$-qubit $\rightarrow$ $4n$-Majoranas mapping used in \cite{kitaev06} will also lead to similar results. Our work constitutes a step forward towards qPCP conjecture (Ferm-PCP, to be precise) and provides a weaker construction of (Fermionic) NLSS Hamiltonians. In comparison to \cite{coble2023local}, our result is somewhat stronger in the sense that a reduction of Matchgate circuits to logarithmically bounded general qubit systems is well known \cite{surprisingmatchgates} unlike for Clifford circuits, to the best of our knowledge. However, we acknowledge that both Gaussian and Clifford circuits are efficiently simulable. Furthermore, Gaussian states are only a subset of sampleable states; thus, more rigorous and novel techniques are required for a thorough treatment of the conjecture. The natural direction from here is to analyze the techniques used in this work and perhaps introduce new methods to construct Hamiltonian families whose low-energy states are more complex than Gaussian states, while simultaneously satisfying NLTS and NLGS.
\newpage

\appendix

\newcommand{\etalchar}[1]{$^{#1}$}

\end{document}